\newcommand{\svdash}{\vdash_s}
\newcommand{\PHbH}{{HbH}}
\newcommand{\inn}{{\it in}}
\newcommand{\out}{{\it out}}
\newcommand{\pt}{{\it pt}}
\newcommand{\tst}{\overline{\overline{t}}}
\newcommand{\asspt}{{\overleftarrow{\pt}}}
\newcommand{\Tst}{\overline{\overline{T}}}
\newcommand{\Asspt}{{\overleftarrow{A_{\pt}}}}
\newenvironment{todo}{\bigskip\hrule\medskip\noindent}{\medskip\hrule\bigskip}
\newtheorem{theorem}{Theorem}
\newtheorem{definition}{Definition}
\newtheorem{lemma}{Lemma}
\newtheorem{remark}{Remark}
\title{Explaining SDN Failures via Axiomatisations}
\author{Georgiana Caltais
\thanks{This work was supported by the DFG project ``CRENKAT'', proj. no. $398056821$.}
\institute{University of Konstanz, Germany}
\email{Georgiana.Caltais@uni-konstanz.de}
}
\begin{document}
\maketitle

\begin{abstract}
This work introduces a concept of explanations with respect to the violation of safe behaviours within software defined networks (SDNs) expressible in NetKAT. The latter is a network programming language that is based on a well-studied mathematical structure, namely, Kleene Algebra with Tests (KAT). Amongst others, the mathematical foundation of NetKAT gave rise to a sound and complete equational theory. In our setting, a safe behaviour is characterised by a NetKAT policy which does not enable forwarding packets from ingress to an undesirable egress. Explanations for safety violations are derived in an equational fashion, based on a modification of the existing NetKAT axiomatisation.
\end{abstract}

\section{Introduction}
\label{sec::intro}

Explaining systems failure has been a topic of interest for many years now.
Techniques such as Fault tree analysis (FTA) and Failure mode and effects analysis (FMEA)~\cite{Buckl:2007:MAD:1927558.1927572}, for instance, have been proposed and widely used by reliability engineers in order to understand how systems can fail, and for debugging purposes.

In the philosophy of science there is a considerable amount of research on counterfactual causal reasoning which, ultimately, can be exploited in explaining system failures as well.
A notion of causality that is frequently used in relation to technical systems 
relies on counterfactual reasoning. The results in~\cite{Lew73} formulate the counterfactual
argument, which defines when an event is considered a cause for some
effect (or hazardous situation) in the following way: a) whenever the event
presumed to be a cause occurs, the effect occurs as well, and b) when the presumed cause does not occur,
the effect will not occur either. Nevertheless, this formulation of causality is considered too simple for explaining complex logical relationships leading to undesired situations. Consequently, the work in~\cite{DBLP:conf/sum/Halpern11} introduces a notion of complex logical events based on boolean equation systems and proposes a number of conditions under which an event can be considered causal for an effect.

The seminal work in~\cite{DBLP:conf/sum/Halpern11} has been adopted in various settings.
Closely related to the explanation of failures based on adoptions of~\cite{DBLP:conf/sum/Halpern11} are the results in~\cite{DBLP:journals/fmsd/BeerBCOT12,DBLP:conf/vmcai/Leitner-FischerL13,DBLP:journals/corr/CaltaisLM16}, for instance. All these works adjust the definition of causality in~\cite{DBLP:conf/sum/Halpern11} to the setting of system executions leading to a failure.
The approach in~\cite{DBLP:journals/fmsd/BeerBCOT12} considers one such execution at a time, and uses counterfactual causal reasoning for identifying the points in the trace that are relevant for the failure. The results in~\cite{DBLP:conf/vmcai/Leitner-FischerL13,DBLP:journals/corr/CaltaisLM16}, on the other hand, aim at discovering the causal explanations for all failures in a system, and strongly rely on model-checking based techniques~\cite{DBLP:books/daglib/0020348}.

In this paper we focus on explaining violations of safe behaviours in software defined networks (SDNs).
{Software defined networking} is an emerging approach to network programming in a setting where the network control is decoupled from the forwarding functions. This makes the network control directly programable, and more flexible to change.
SDN proposes open standards such as the OpenFlow~\cite{DBLP:journals/ccr/McKeownABPPRST08} API defining, for instance, low-level languages for handling switch configurations. Typically, this kind of hardware-oriented APIs are not intuitive to use in the development of programs for SDN platforms.
Hence, a suite of network programming languages raising the level of abstraction of programs, and corresponding verification tools have been recently proposed~\cite{DBLP:conf/icfp/FosterHFMRSW11,DBLP:conf/dsl/VoellmyH09,DBLP:conf/sigcomm/VoellmyWYFH13}.
It is a known fact that formal foundations can play an important role in guiding the development of programming languages and associated verification tools, in accordance with an intended semantics obeying essential (behavioural) laws.
Correspondingly, the current paper is targeting {NetKAT}~\cite{DBLP:conf/popl/AndersonFGJKSW14,DBLP:conf/popl/FosterKM0T15}  --a formal framework for specifying and reasoning about networks, integrated within the Frenetic suite of network management tools~\cite{DBLP:conf/icfp/FosterHFMRSW11}. More precisely, we will exploit the sound and complete axiomatisation of NetKAT in~\cite{DBLP:conf/popl/AndersonFGJKSW14} in order to derive explanations of safety failures in a purely equational fashion. It is well known that equational reasoning could alleviate the state explosion issue characteristic to model-checking, by equating terms equivalent modulo associativity, commutativity and indepotnecy, for instance.

Related to the current work,
the results in~\cite{root-fail-netkat} introduce a framework for automated failure localisation in NetKAT.
The approach in~\cite{root-fail-netkat} relies on the generation of test cases based on the network specification, further used to monitor the network traffic accordingly and localise faults whenever tests are not satisfied.
In contrast, our approach falls under the umbrella of equivalence checking, and it provides an explanation for all possible failures, irrespective of particular input packets.

\bigskip
\noindent
{\it Our contributions.} In this paper we introduce a concept of \emph{safety in NetKAT} which, in short, refers to the impossibility of packets to travel from a given ingress to a specified hazardous egress. Then, we propose a notion of \emph{safety failure explanation} which, intuitively, represents the set of finite paths within the network, leading to the hazardous egress. Eventually, we provide a modified version of the original axiomatisation of NetKAT exploited in order to \emph{automatically compute the safety failure explanations}, if any. The axiomatisation employs a proposed \emph{star-elimination construction} which enables the sound extraction of explanations from Kleene $*$-free NetKAT programs. 

\bigskip
\noindent
{\it Structure of the paper.} In Section~\ref{sec::prelim} we provide an overview of NetKAT and the associated sound and complete axiomatisation. In Section~\ref{sec::safety} we define the concept of safety in NetKAT.
In Section~\ref{sec::expln-fail} we introduce the notion of safety failure explanation and the axiomatisation which can be exploited in order to compute such explanations. In Section~\ref{sec::discussion} we draw the conclusions and pointers to future work.

\section{Preliminaries}
\label{sec::prelim}

%\begin{figure}
%%\center
%\includegraphics[scale=0.55,origin=c]{netkat}
%\caption{NetKAT: syntax, semantics and axiomatisation~\cite{DBLP:conf/popl/AndersonFGJKSW14} \label{fig::netkat}} 
%\end{figure}

As pointed out in~\cite{DBLP:conf/popl/AndersonFGJKSW14},
a network can be interpreted as an
automaton that forwards packets from one node to another along the links
in its topology. This lead to the idea of using regular expressions --the language of finite automata--, for expressing networks. A path is encoded as a concatenation of processing
steps ($p.q.\ldots$), a set of paths is encoded as a union of paths
($p+q+\ldots$) whereas iterated processing is encoded using Kleene $*$.
This paves the way to reasoning about properties of networks using Kleene Algebra with Tests (KAT)~\cite{DBLP:journals/toplas/Kozen97}. KAT incorporates both Kleene Algebra~\cite{DBLP:journals/iandc/Kozen94} for
reasoning about network structure and Boolean Algebra for reasoning
about the predicates that define switch behaviour.

\begin{figure}[ht]
\center
\includegraphics[scale=0.18,origin=c]{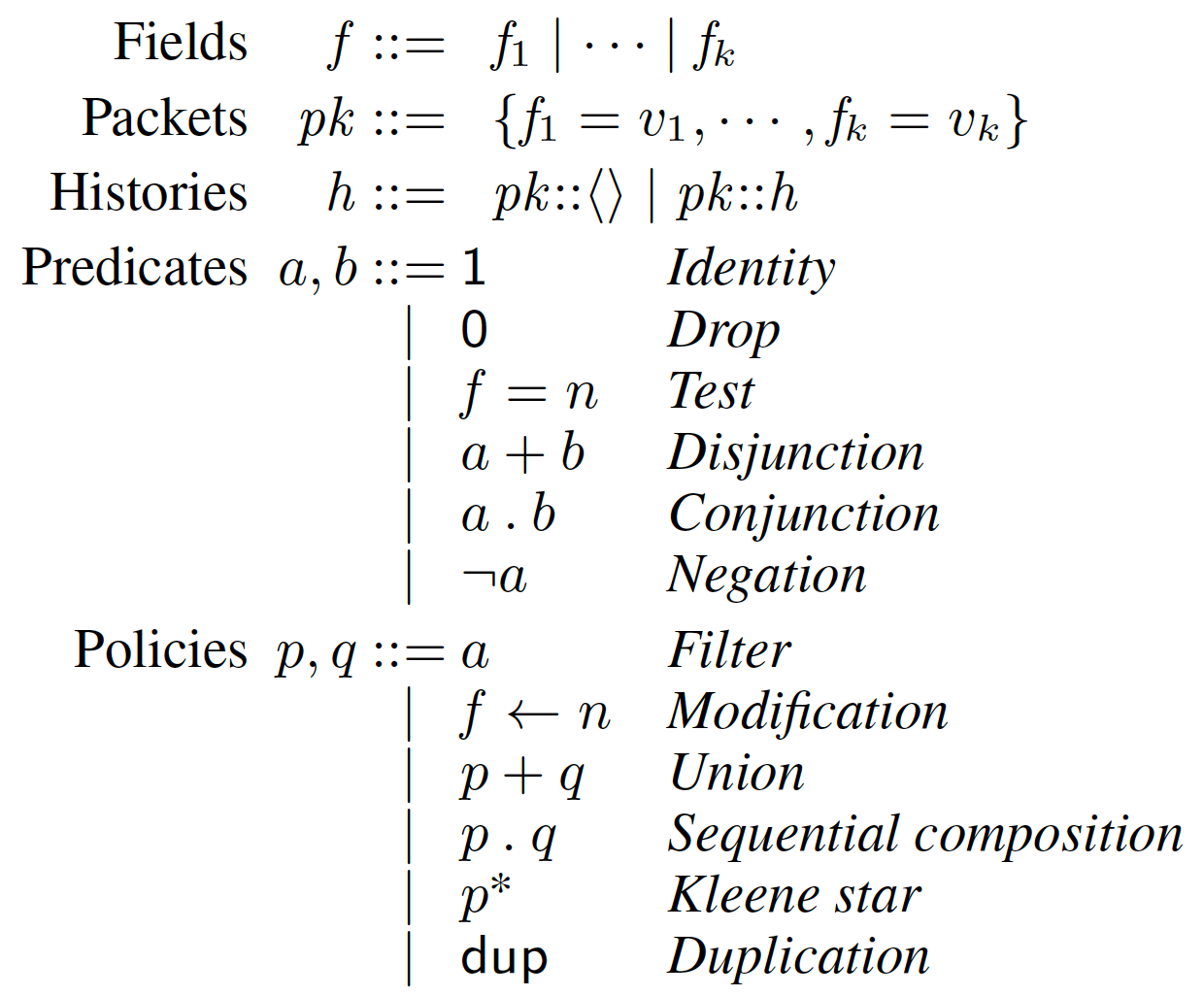}
\caption{NetKAT syntax~\cite{DBLP:conf/popl/AndersonFGJKSW14} \label{fig::syntax}} 
\end{figure}

\begin{figure}[ht]
\center
\includegraphics[scale=0.18,origin=c]{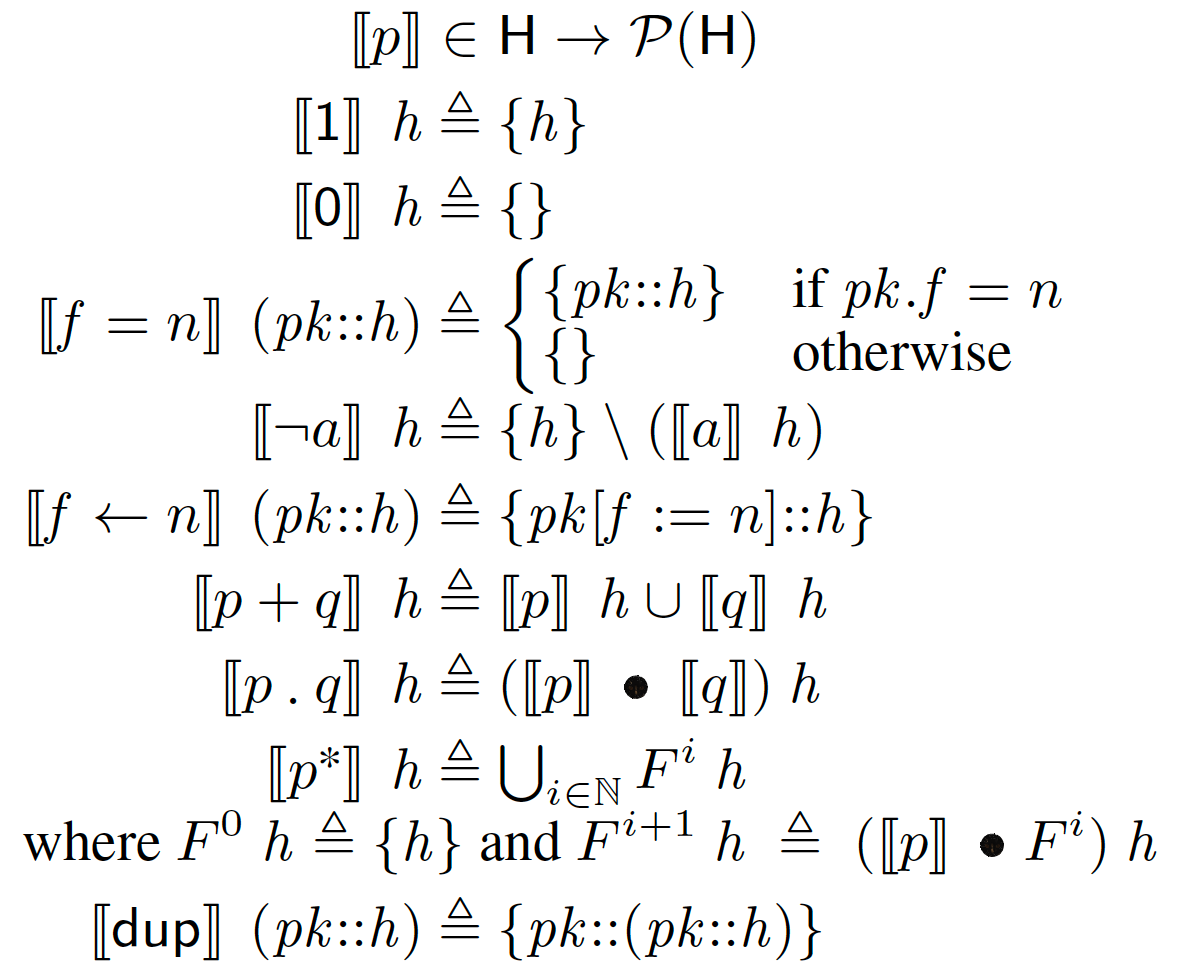}
\caption{NetKAT semantics~\cite{DBLP:conf/popl/AndersonFGJKSW14} \label{fig::semantics}} 
\end{figure}

\begin{figure}[ht]
\center
\[
\begin{array}{r@{}c@{}ll|r@{}c@{}ll}
p + (q+r) & \,\equiv\, & (p+q) + r & \textnormal{KA-PLUS-ASSOC}~ &
~a + (b.c) & \,\equiv\, & (a+b).(a+c) & \textnormal{BA-PLUS-DIST}\\
p + q& \equiv & q+p & \textnormal{KA-PLUS-COMM} &
a +1 & \equiv & 1 & \textnormal{BA-PLUS-ONE}\\
p + 0& \equiv &p & \textnormal{KA-PLUS-ZERO} &
a +\neg a & \equiv & 1 & \textnormal{BA-EXCL-MID}\\
p + p & \equiv & p & \textnormal{KA-PLUS-IDEM} &
a .b & \equiv & b.a & \textnormal{BA-SEQ-COMM}\\
p . (q.r) & \,\equiv\, & (p.q) . r & \textnormal{KA-SEQ-ASSOC}~ &
a . \neg a & \,\equiv\, & 0 & \textnormal{BA-CONTRA}\\
1.p & \,\equiv\, & p & \textnormal{KA-ONE-SEQ}~ &
a . a & \,\equiv\, & a & \textnormal{BA-SEQ-IDEM}\\
p.1 & \,\equiv\, & p & \textnormal{KA-SEQ-ONE}~ & & &  & \\
p.(q+r) & \,\equiv\, & p.q + p.r & \textnormal{KA-SEQ-DIST-L}~ & & &  & \\
(p+q).r & \,\equiv\, & p.r + q.r & \textnormal{KA-SEQ-DIST-R}~ & & &  & \\
0.p & \,\equiv\, & 0 & \textnormal{KA-ZERO-SEQ}~ & & &  & \\
p.0 & \,\equiv\, & 0 & \textnormal{KA-ZERO-SEQ}~ & & &  & \\
1 + p.p^* & \,\equiv\, & p^* & \textnormal{KA-UNROLL-L}~ & & &  & \\
1 + p^*.p & \,\equiv\, & p^* & \textnormal{KA-UNROLL-R}~ & & &  & \\
q+p.r \leq r  & \,\Rightarrow\, & p^* .q\leq r & \textnormal{KA-LFP-L}~ & & &  & \\
p+q.r \leq q  & \,\Rightarrow\, & p.r^*\leq q & \textnormal{KA-LFP-R}~ & & &  & \\
\end{array}
\]
\caption{Kleene Algebra Axioms \& Boolean Algebra Axioms~\cite{DBLP:conf/popl/AndersonFGJKSW14} \label{fig::axiom-KAT}} 
\end{figure}

\begin{figure}[ht]
\center
\[
\begin{array}{r@{}c@{}ll}
f \leftarrow n . f' \leftarrow n' & \,\equiv\, & f' \leftarrow n' . f \leftarrow n, \textnormal{if } f \not = f' ~~& \textnormal{PA-MOD-MOD-COMM} \\
f \leftarrow n . f' = n' & \equiv & f' \leftarrow n' . f = n, \textnormal{if } f \not = f' & \textnormal{PA-MOD-FILTER-COMM} \\
{\bf dup} . f = n & \equiv & f = n . {\bf dup} & \textnormal{PA-DUP-FILTER-COMM} \\
f \leftarrow n . f = n & \equiv & f \leftarrow n & \textnormal{PA-MOD-FILTER}\\
f = n . f \leftarrow n & \equiv & f = n & \textnormal{PA-FILTER-MOD}\\
f \leftarrow n . f \leftarrow n' & \equiv & f \leftarrow n' & \textnormal{PA-MOD-MOD}\\
f = n . f = n' & \equiv & 0, \textnormal{if } n \not= n'& \textnormal{PA-CONTRA}\\
\Sigma_i f = i & \equiv & 1 & \textnormal{PA-MATCH-ALL}
\end{array}
\]
\caption{Packet Algebra Axioms~\cite{DBLP:conf/popl/AndersonFGJKSW14} \label{fig::axiom-packet}} 
\end{figure}

NetKAT \emph{packets} ${\it pk}$ are encoded as sets of fields $f_i$ and associated values $v_i$ as in Figure~\ref{fig::syntax}. \emph{Histories} are defined as lists of packets, and are exploited in order to define the semantics of NetKAT policies/programs as in Figure~\ref{fig::semantics}. NetKAT \emph{policies} are recursively defined as: {predicates}, field modifications $f \leftarrow n$, union of policies $p+q$ ($+$ plays the role of a multi-casting like operator), sequencing of policies $p.q$, repeated application of policies $p^*$ (the Kleene $*$) and duplication $\bf dup$ (that saves the current packet at the beginning of the history list). \emph{Predicates}, on the other hand, can be seen as filters. The constant predicate $0$ drops all the packets, whereas its counterpart predicate $1$ retains all the packets. The test predicate $f = n$ drops all the packets whose field $f$ is not assigned value $n$. Moreover, $\neg a$ stands for the negation of predicate $a$, $a+b$ represents the disjunction of predicates $a$ and $b$, whereas $a.b$ denotes their conjunction.

Let $H$ be the set of all histories, and ${\cal P}(H)$ be the powerset of $H$.
In Figure~\ref{fig::semantics},
the semantic definition of a NetKAT policy $p$ is given as a function $\llbracket p\rrbracket$ that takes a history $h \in H$ and produces a (possibly empty) set of histories in ${\cal P}(H)$.
Some intuition on the semantics of policies was already provided in the paragraph above.
In addition, note that negated predicates drop the packets not satisfying that predicate: $\llbracket \neg a\rrbracket h = \{h\} \setminus \llbracket a \rrbracket h$. The sequential composition of policies $\llbracket p.q\rrbracket$ denotes the Kleisli composition $\bullet$ of the functions  $\llbracket p \rrbracket$ and  $\llbracket q\rrbracket$ defined as:
\[
(f \bullet g) x \triangleq \bigcup \{g\, y \mid y \in f\, x\}.
\]
The repeated iteration of policies is interpreted as the union of $F^i\,h$, where the semantics of each $F^i $ coincides with the semantics of the policy resulted by concatenating $p$ with itself for $i$ times, for $i \in {\mathbb N}$.

In Figure~\ref{fig::axiom-KAT} and Figure~\ref{fig::axiom-packet} we recall the sound and complete axiomatisation of NetKAT. The Kleene Algebra with Tests axioms in Figure~\ref{fig::axiom-KAT}, have been formerly introduced in~\cite{DBLP:journals/toplas/Kozen97}.
Completeness of NetKAT results from the packet algebra axioms in Figure~\ref{fig::axiom-packet}.
The axiom PA-MOD-MOD-COMM stands for the commutativity of different field assignments, whereas PA-MOD-FILTER-COMM denotes the commutativity of different field assignments and tests, for instance.
PA-MOD-MOD states that two subsequent modifications of the same field can be reduced to capture the last modification only. The axiom PA-CONTRA states that the same field of a packet cannot have two different values, etc.

We write $\vdash e \equiv e'$, or simply $e \equiv e'$, whenever the equation $e \equiv e'$ can be proven according to the NetKAT axiomatisation.

Assume, for an example, a simple network consisting four hosts $H_1, H_2, H_3$ and $H_4$ communicating with each other via two switches $A$ and $B$, via the uniquely-labeled ports $1, 2, \ldots, 6$, as illustrated in Figure~\ref{fig::simpl-net}. The network topology can be given by the NetKAT expression:
\begin{equation}\label{eq::topology}
\begin{array}{rcl}
t & \triangleq & \pt= 5 . \pt \leftarrow 6 + \pt = 6 . \pt \leftarrow 5 \,+\\
& & \pt=1 + \pt=2 + \pt=3 + \pt=4
\end{array}
\end{equation}
For an intuition, in~(\ref{eq::topology}), the expression $ \pt= 5 . \pt \leftarrow 6 + \pt = 6 . \pt \leftarrow 5$ encodes the internal link $5 - 6$ by using the sequential composition of a filter that keeps the packets at one end of the link and a modification that updates the $\pt$ fields to the location at the other end of the link.
A link at the perimeter of the network is encoded as a filter that returns the packets located at the ingress port.

\begin{figure}
\center
\includegraphics[scale=0.3,origin=c]{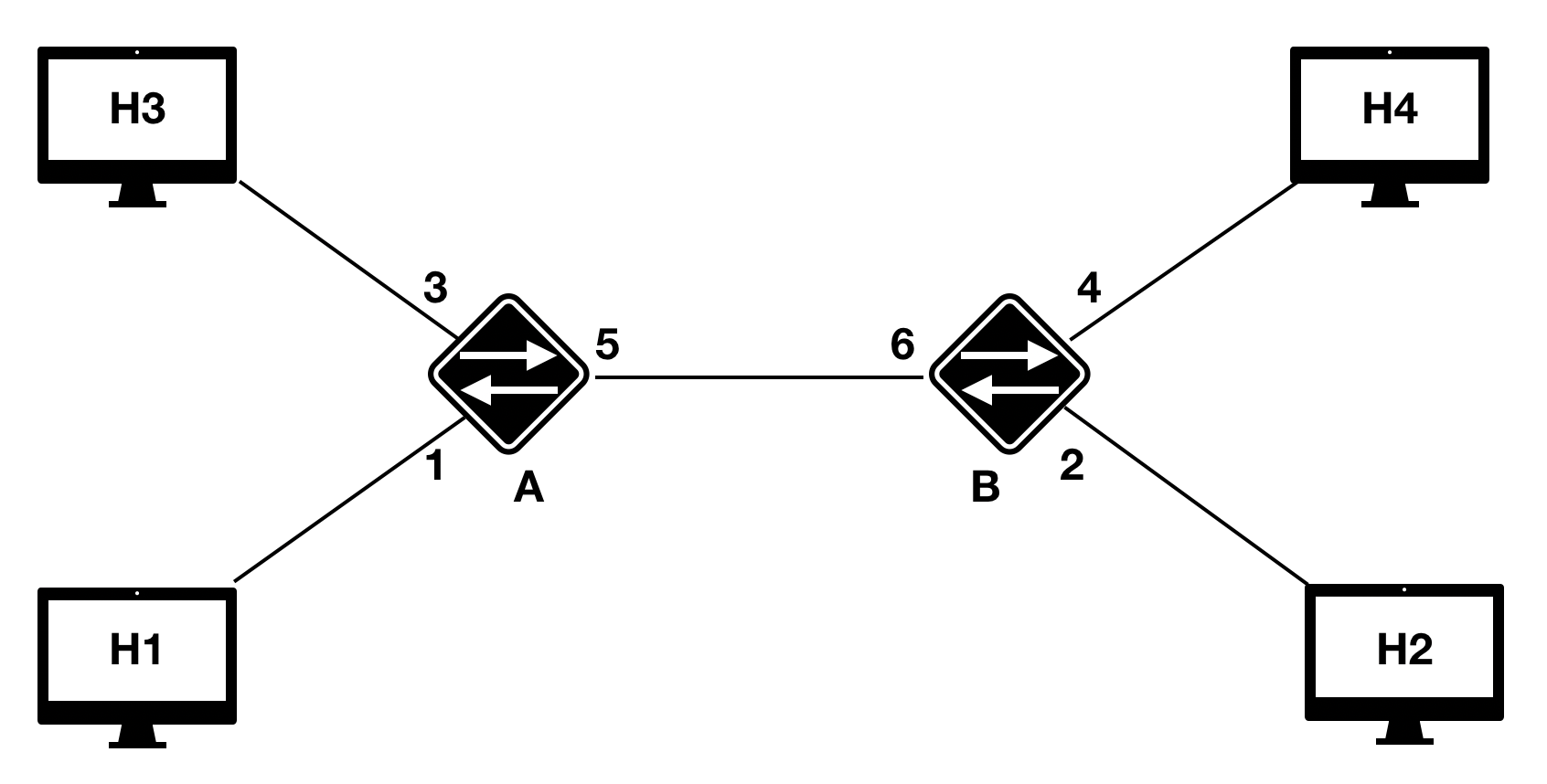}
\caption{A simple network \label{fig::simpl-net}} 
\end{figure}

Furthermore, assume a programmer $P_1$ as in~\cite{DBLP:conf/popl/AndersonFGJKSW14} which has to encode a switch policy that only enables transferring packets from $H_1$ to $H_2$. $P_1$ might define the ``hop-by-hop'' policy in~(\ref{eq::pol-p1}), where each summand stands for the forwarding policy on switch $A$ and $B$, respectively. 
\begin{equation}\label{eq::pol-p1}
p_1  \triangleq \pt = 1 . \pt \leftarrow 5 + \pt = 6 . \pt \leftarrow 2
\end{equation}
In the expression above, the NetKAT expression $\pt = 1 . \pt \leftarrow 5$ sends the packets arriving at port $1$ on switch $A$, to port $5$, whereas $\pt = 6 . \pt \leftarrow 2$ sends the packets at port $6$ on switch $B$, to port $2$. 

At this point, from $P_1$'s perspective, the end-to-end behaviour of the network is defined as:
\begin{equation}\label{eq::p1-e-to-e}
(\pt = 1) . (p_1 . t)^* . (\pt = 2)
\end{equation}
In words: packets situated at ingress port $1$ (encoded as $\pt = 1$) are forwarded to egress port $2$ (encoded as $\pt = 2$) according to the switch policy $p_1$ and topology $t$ (encoded as $(p_1 . t)^*$).

More generally, assuming a switch policy $p$, topology $t$, ingress ${\it in}$ and egress ${\it out}$, the \emph{end-to-end behaviour} of a network is defined as:
\begin{equation}\label{eq::net-beh}
{\it in} . (p . t)^* . {\it out}
\end{equation}

Hence, based on~(\ref{eq::p1-e-to-e}), in order to asses the correctness of $P_1$'s program, one has to show that:
\begin{enumerate}
\item packets at port $1$ reach port $2$, i.e.,
\begin{equation}\label{eq::p1-reach}
\vdash (\pt = 1) . (p_1 . t)^* . (\pt = 2) \not \equiv 0
\end{equation}
\item no packets at port $1$ can reach ports $3$ or $4$, i.e.,
\begin{equation}\label{eq::p1-not-reach}
\vdash (\pt = 1) . (p_1 . t)^* . (\pt = 3 + \pt = 4) \equiv 0.
\end{equation}
\end{enumerate}

By applying the NetKAT axiomatisation, the inequality in~(\ref{eq::p1-reach}) can be equivalently rewritten as:
\begin{equation}\label{eq::equiv-one-reach-two}
\vdash \pt = 1 . \pt \leftarrow 2 + e\not \equiv 0
\end{equation}
with $e$ a NetKAT expression. Observe that $\pt = 1 . \pt \leftarrow 2 $ cannot be reduced further. Hence, the inequality in~(\ref{eq::p1-reach}) holds, as $\pt = 1 . \pt \leftarrow 2 \not \equiv 0$. In other words, the packets located at port $1$ reach port $2$.
Showing that no packets at port $1$ can reach port $3$ or $4$ follows in a similar fashion.

\section{Safety in NetKAT}
\label{sec::safety}

As discussed in the previous section, arguing on equivalence of NetKAT programs can be easily performed in an equational fashion. One interesting way of further exploiting the NetKAT framework is to formalise and reason about well-known notions of program correctness such as safety, for instance. Intuitively, a safety property states that ``something bad never happens''. Ideally, the framework would provide a positive answer whenever a certain safety property is satisfied by the program, and an explanation of what went wrong in case the property is violated.

Consider the example of programmer $P_1$. The ``bad'' thing that could happen is that his switch policy enabled packets to reach ports $3$ or $4$. One can encode such a hazard via the egress policy ${\it out} \triangleq \pt = 3 + \pt = 4$, and the whole safety requirement as in~(\ref{eq::p1-not-reach}). As previously discussed, the NetKAT axiomatisation provides a positive answer with respect to the satisfiability of the safety requirement in~(\ref{eq::p1-not-reach}).

We further proceed by formalising a notion of \emph{port-based hop-by-hop} policy and a \emph{safety} concept in NetKAT.
Let $f_1, \ldots, f_n$ be the list of fields defining a packet, including the port field $\pt$.
Moreover, for simplicity, and without loss of generality, assume no two different ports have the same value/identifier in the network.
Let $\Asspt$ represent the set of all possible port assignments $\pt \leftarrow v$ for some value $v$.
Let $\Tst$ denote the set of all possible tests of the form $f = v$ for some field $f \in \{f_1, \ldots, f_n\}$ and value $v$. 
We write:
\begin{itemize}
 \item $\Tst^*$ to represent the set of all sequences of tests in $\Tst$
 \item $\Sigma_{\Tst^*}$ for the set of all tests $\tst = \Sigma_{i = \{i, \ldots, m\}} \tst^*_i$ with $\tst^*_i \in \Tst^*$, for all $i \in \{1, \ldots m\}$
% \item  $\Ass^*$, respectively  $\Asspt^*$, to denote the set of all sequences of assignments in $\Ass$, respectively $\Asspt$
 %\item $\Sigma_{\Ass^*}$, respectively $\Sigma_{\Asspt^*}$, for the set of all assignments $a = \Sigma_{i = \{i, \ldots, m\}} a^*_i$ with $a^*_i \in \Ass^*$, respectively  $a^*_i \in \Asspt^*$, for all $i \in \{1, \ldots m\}$.
 \end{itemize}

%
%\begin{definition}[{\PHbH} Switch Policy]\label{def::hbh}
%A switch policy $p$ is called \emph{port-based hop-by-hop ({\PHbH})} if it is defined as:
%\begin{equation}\label{eq::HbH}
%p = \Sigma_{i \in \{1, \ldots, m\}} \, \tst_i \,.\, \asspt_i
%\end{equation}
%with $\tst_i \in \Tst^*$and $\asspt_i \in \Ass$ is a port assignment of shape $\pt \leftarrow v_j$, with $v_j$ a value, for all $i \in \{1, \ldots, m\}$.

\begin{definition}[{\PHbH} Switch Policy]\label{def::hbh}
A switch policy $p$ is called \emph{port-based hop-by-hop ({\PHbH})} if it is defined as:
\begin{equation}\label{eq::HbH}
p \triangleq \Sigma_{i \in \{1, \ldots, m\}} \, \tst^*_i \,.\, \asspt_i
\end{equation}
where $\tst^*_i \in {\Tst^*}$ and $\asspt_i \in \Asspt$, for all $i \in \{1, \ldots, m\}$.

We call $m$ the \emph{size} of the {\PHbH} policy $p$.
\end{definition}

\begin{definition}[In-Out Safe]\label{def::safety}
Assume a network topology $t$, a {\PHbH} switch policy $p$, an ingress policy $\inn \in \Sigma_{\Tst^*}$, and an egress policy $\out \in \Sigma_{\Tst^*}$, the latter encoding the hazard, or the ``bad thing''. The end-to-end network behaviour is \emph{in-out safe} whenever the following holds:
\begin{equation}
\vdash \inn . (p . t)^* . \out \equiv 0.
\end{equation}
\end{definition}
Intuitively, none of the packages at ingress $\inn$ can reach the ``hazardous'' egress $\out$ whenever forwarded according to the switch policy $p$, across the topology $t$.

\begin{remark}
A notion of reachability within NetKAT-definable networks was proposed in~\cite{DBLP:conf/popl/AndersonFGJKSW14} based on the existence of a non-empty packet history that, in essence, records all the packet modifications produced by the policy $ \inn . (p . t)^* . \out $. This is more like a model-checking-based technique that enables identifying \emph{one} counterexample witnessing the violation of the property $ \inn . (p . t)^* . \out \equiv 0$.
As we shall later see, in our setting, we are interested in identifying \emph{all} (loop-free) counterexamples. Hence, we propose a notion of in-out safe behaviour for which, whenever violated, we can provide all relevant bad behaviours.
\end{remark}

Going back to the example in Section~\ref{sec::prelim}, assume a new programmer $P_2$ which has to enable traffic only from $H_3$ to $H_4$. Assuming the network in Figure~\ref{fig::simpl-net}, $P_2$ encodes the {\PHbH} switch policy:
\begin{equation}\label{eq::pol-p2}
p_2  \triangleq \pt = 3 . \pt \leftarrow 5 + \pt = 6 . \pt \leftarrow 4
\end{equation}
The end-to-end behaviour can be proven correct, by showing that:
\begin{enumerate}
\item packets at port $3$ reach port $4$, i.e.,
\begin{equation}\label{eq::p2-reach}
\vdash (\pt = 3) . (p_2 . t)^* . (\pt = 4) \not \equiv 0
\end{equation}
\item no packets at port $3$ can reach ports $1$ or $2$, i.e.,
\begin{equation}\label{eq::p2-not-reach}
\vdash (\pt = 3) . (p_2 . t)^* . (\pt = 1 + \pt = 2) \equiv 0.
\end{equation}
\end{enumerate}

Nevertheless, it is easy to show that the composed policies $p_1$ in~(\ref{eq::pol-p1}) and $p_2$ in~(\ref{eq::pol-p2}) do not guarantee a safe behaviour. Namely, in the context of the {\PHbH} policy $p_1 + p_2$, packets at port $1$ can reach ports $3$ or $4$, and packets at port $3$ can reach ports $1$ or $2$. This violates the correctness properties in~(\ref{eq::p1-not-reach}) and~(\ref{eq::p2-not-reach}), respectively:
\begin{equation}\label{eq::p12-not-safe1}
\vdash (\pt = 1) . ((p_1 + p_2) . t)^* . (\pt = 3 + \pt = 4)\not \equiv 0
\end{equation}
\begin{equation}\label{eq::p12-not-safe2}
\vdash (\pt = 3) . ((p_1 + p_2) . t)^* . (\pt = 1 + \pt = 2)\not \equiv 0
\end{equation}

In the next section, we would like to provide the explanation for the failure of the network safety, as expressed in~(\ref{eq::p12-not-safe1}) and~(\ref{eq::p12-not-safe2}).

\section{Explaining Safety Failures}\label{sec::expln-fail}

Naturally, the first attempt to explain safety failures is to derive the counterexamples according to the NetKAT axiomatisation. Take, for instance, the end-to-end behaviour $(\pt = 1) . ((p_1 + p_2) . t)^* . (\pt = 3 + \pt = 4)$ in~(\ref{eq::p12-not-safe1}). The axiomatisation leads to the following equivalence:
\begin{equation}\label{eq::simple-counterexample}
(\pt = 1) . ((p_1 + p_2) . t)^* . (\pt = 3 + \pt = 4) \equiv (\pt = 1 . \pt \leftarrow 4) + e
\end{equation}
where $e$ is a NetKAT expression containing the Kleene $*$. A counterexample can be immediately spotted, namely: $\pt = 1 . \pt \leftarrow 4$. Nevertheless, the information it provides is not intuitive enough to serve as an explanation of the failure. Moreover, $e$ can hide additional counterexamples revealed after a certain number of $*$-unfoldings according to KA-UNROLL-R/L in Figure~\ref{fig::axiom-packet}.

In what follows, the focus is on the following two questions:
\begin{enumerate}
\item[$Q_1$:] Can we reveal \emph{more information} within the counterexamples witnessing safety failures?
\item[$Q_2$:] Can we reveal \emph{all the counterexamples} hidden within NetKAT expressions containing $*$?
\end{enumerate}

The answer to $Q_1$ is relatively simple: yes, we can reveal more information on how the packets travel across the topology by inhibiting the PA-MOD-MOD and PA-FILTER-MOD axioms in Figure~\ref{fig::axiom-packet}. Recall that, intuitively, this axiom records only the last modification from a series of modifications of the same field.

The answer to $Q_2$ lies behind the following two observations. (1) From a practical perspective, in order to explain failures it suffices to look at loop-free forwarding paths within the network topology. Reaching the same port twice along a path does not add insightful information about the reason behind the violation of a safety property, as the network behaviour is preserved in the context of that port. Considering loop-free paths is also in accordance with the minimality criterion invoked in the seminal work on causal reasoning in~\cite{DBLP:conf/sum/Halpern11}, for instance.
(2) A {\PHbH} switch policy $p$ of size $n$ entails loop-free paths from $\inn$ to $\out$ crossing at most $n$ switches within a topology $t$. Hence, in order to determine all loop-free paths from $\inn$ to $\out$ it suffices to apply the {\PHbH} policy $p$ along $t$ for $n$ times. Showing that the suggested approximation is sound reduces to showing:
\begin{equation}\label{eq:e2e-n}
\vdash \inn . (p.t)^*. \out \equiv 0 {\textnormal{~~iff~~}} \vdash \inn . (1 + p.t)^n. \out \equiv 0
\end{equation}

Lemma~\ref{lm::expand-m} and Lemma~\ref{lm::expand-m-star} are needed in order to prove the equivalence in~(\ref{eq:e2e-n}).
\begin{lemma}\label{lm::expand-m}
Let $p,\, t$ be two NetKAT policies. The following holds, for all natural numbers $n$:
\begin{equation}\label{eq::unfold}
(1 + p.t)^n \,\, \equiv \,\, 1 + p.t + (p.t)^2 + \ldots + (p.t)^n
\end{equation}
\end{lemma}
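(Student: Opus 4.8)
The plan is to induct on $n$, using only the Kleene Algebra axioms of Figure~\ref{fig::axiom-KAT}. To lighten the notation, write $x \triangleq p.t$, so that the goal reads $(1+x)^n \equiv 1 + x + x^2 + \ldots + x^n$, where $y^k$ abbreviates the $k$-fold sequential composition of $y$ with itself, with the convention $y^0 \equiv 1$. For the base case $n = 0$ both sides reduce to $1$ by this convention, and the case $n = 1$ is the identity $(1+x)^1 \equiv 1 + x$, which holds trivially.

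For the inductive step I assume the claim for $n$ and start from $(1+x)^{n+1} \equiv (1+x)^n . (1+x)$. Substituting the induction hypothesis gives $(1 + x + \ldots + x^n).(1+x)$, and I then expand this product purely equationally. First I distribute over the right factor with KA-SEQ-DIST-L, splitting the product into the summand $(1 + x + \ldots + x^n).1$ and the summand $(1 + x + \ldots + x^n).x$. The former simplifies to $1 + x + \ldots + x^n$ via KA-SEQ-ONE. For the latter I distribute with KA-SEQ-DIST-R and rewrite each factor $x^i.x$ as $x^{i+1}$ using KA-SEQ-ASSOC (and KA-ONE-SEQ for the leading $1.x$), obtaining $x + x^2 + \ldots + x^{n+1}$. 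Putting the two summands together yields
\[
(1 + x + x^2 + \ldots + x^n) + (x + x^2 + \ldots + x^{n+1}).
\]

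The one delicate point — and the step I expect to do the real work — is collapsing this expression into the desired $1 + x + \ldots + x^{n+1}$. The two bracketed sums overlap precisely in the terms $x, x^2, \ldots, x^n$, and it is idempotency of $+$ (KA-PLUS-IDEM), applied after reassociating and reordering the summands with KA-PLUS-ASSOC and KA-PLUS-COMM, that merges each duplicated $x^i$ into a single copy. What survives is the constant $1$ from the first bracket, the shared powers $x, \ldots, x^n$, and the fresh top power $x^{n+1}$ from the second bracket, which is exactly the right-hand side for $n+1$. The only genuinely non-routine ingredient is thus the interplay of associativity, commutativity and idempotency needed to treat $+$ as a set-like union and to eliminate the repeated powers; the distributivity and unit calculations that precede it are mechanical.
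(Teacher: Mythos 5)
Your proof is correct and follows essentially the same route as the paper's: induction on $n$, unfolding $(1+p.t)^{n+1}$ as $(1+p.t)^n.(1+p.t)$, distributing with KA-SEQ-DIST-L/R and the unit laws, and collapsing the duplicated powers via KA-PLUS-IDEM. Your version is simply more explicit about the associativity/commutativity bookkeeping needed for that final merge.
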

\begin{proof}
The proof follows by induction on $n$.\\
\emph{Base case:} $n = 0$. If $n = 0$ then $(1 + (p . t))^0 = 1$, inferred based on the definition of Kleisli composition.\\
\emph{Induction step:} Assume~(\ref{eq::unfold}) holds for all $k$ such that $0 \leq k \leq n$. It follows that:
\[
\begin{array}{rcll}
(1 + p.t)^{n+1} & \equiv & (1 + p.t)^n . (1 + p.t) & \textnormal{(Kleisli composition)}\\
& \equiv & (1 + p.t + (p.t)^2 + \ldots + (p.t)^n).(1 + p.t) & \textnormal{(ind. hypothesis)}\\
& \equiv & 1 + p.t + (p.t)^2 + \ldots + (p.t)^n + &  \\
&    & ~~~~~~ p.t + (p.t)^2 + \ldots + (p.t)^n + (p.t)^{n+1}&  \textnormal{(KA-SEQ-DIST-L/R, KA-ONE-SEQ)}\\
& \equiv & 1 + p.t + (p.t)^2 + \ldots + (p.t)^{n} + (p.t)^{n+1} & \textnormal{(KA-PLUS-IDEM)}
\end{array}
\]
Hence,~(\ref{eq::unfold}) holds.
\end{proof}

\begin{lemma}\label{lm::expand-m-star}
Let $p,\, t,\, \inn,\, \out$ be NetKAT policies. The following holds, for all natural numbers $n$:
\begin{equation}\label{eq::unfold-star}
\inn.(1 + p.t)^n.\out \,\, \leq \,\, \inn.(p.t)^*.\out
\end{equation}
\end{lemma}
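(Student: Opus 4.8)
The plan is to reduce the stated inequality to a statement about each individual power $(p.t)^i$ lying below the Kleene star $(p.t)^*$, and then to propagate that fact through the surrounding context $\inn . (\,\cdot\,) . \out$. Throughout, I would keep in mind that $a \leq b$ is an abbreviation for $a + b \equiv b$, so every step is ultimately an equational derivation within the NetKAT axiomatisation of Figure~\ref{fig::axiom-KAT}. The first move is to rewrite the left-hand side via Lemma~\ref{lm::expand-m}, obtaining $(1 + p.t)^n \equiv 1 + p.t + (p.t)^2 + \ldots + (p.t)^n$, so that the goal reduces to showing this finite sum is below $(p.t)^*$.

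Next I would isolate two monotonicity facts that follow directly from the axioms and the definition of $\leq$: (i) if $a \leq b$ then $c.a \leq c.b$ and $a.c \leq b.c$, which follows from KA-SEQ-DIST-L/R applied to $a + b \equiv b$; and (ii) if $a \leq c$ and $b \leq c$ then $a + b \leq c$, which follows from KA-PLUS-ASSOC, KA-PLUS-COMM and KA-PLUS-IDEM. These turn $\leq$ into a preorder compatible with both operations and are the reusable workhorses for the remaining argument.

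Writing $q \triangleq p.t$ for brevity, I would then prove by induction on $i$ that $q^i \leq q^*$. For the base case, KA-UNROLL-L gives $q^* \equiv 1 + q.q^*$, whence $1 + q^* \equiv q^*$ by KA-PLUS-IDEM, i.e. $q^0 = 1 \leq q^*$. For the inductive step, from $q^i \leq q^*$ and fact (i) I get $q^{i+1} \equiv q.q^i \leq q.q^*$, and since $q.q^* \leq 1 + q.q^* \equiv q^*$, transitivity of $\leq$ yields $q^{i+1} \leq q^*$. Applying fact (ii) across $i = 0, \ldots, n$ then gives $1 + q + \ldots + q^n \leq q^*$, i.e. $(1 + p.t)^n \leq (p.t)^*$. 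Finally, applying fact (i) twice, pre-multiplying by $\inn$ and post-multiplying by $\out$, transports this into the desired $\inn.(1 + p.t)^n.\out \leq \inn.(p.t)^*.\out$.

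I expect the genuine content of the proof to sit in the monotonicity facts (i) and (ii) and in the induction establishing $q^i \leq q^*$; everything else is bookkeeping built on top of them. These are standard Kleene-algebra properties, but they are \emph{derived} rather than axiomatic, so the main obstacle is simply being careful that each is justified purely from the equations of Figure~\ref{fig::axiom-KAT} and the $+$-based definition of $\leq$, rather than silently assuming order-theoretic facts. Note in particular that the unrolling axioms, not the least-fixpoint rules KA-LFP-L/R, suffice here, since we only need one direction of the bound.
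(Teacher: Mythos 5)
Your proof is correct, and it reaches the result by a somewhat different decomposition than the paper. The paper unrolls the star $n+1$ times to obtain the equality $\inn.(p.t)^*.\out \equiv \inn.(1 + p.t + \ldots + (p.t)^{n} + (p.t)^{n+1}.(p.t)^*).\out$, then invokes Lemma~\ref{lm::expand-m} to recognise the prefix as $\inn.(1+p.t)^n.\out$, so that the right-hand side is literally the left-hand side plus a remainder term and the inequality is read off directly from the $+$-based definition of $\leq$. You instead first establish the context-free bound $(1+p.t)^n \leq (p.t)^*$ by proving $q^i \leq q^*$ for each $i$ by induction (base case from KA-UNROLL-L and idempotency, step from monotonicity of sequencing), and then transport it through $\inn.(\cdot).\out$ using your explicitly derived monotonicity facts. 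The two arguments have the same essential content --- both ultimately rest on Lemma~\ref{lm::expand-m}, KA-UNROLL-L and distributivity --- but yours is more modular: the paper's opening equivalence itself hides an $(n+1)$-fold induction that is only gestured at (``by KA-UNROLL-L/R, KA-PLUS-IDEM and KA-SEQ-DIST-L/R''), whereas you make every inductive and order-theoretic step explicit and reusable. The price is a slightly longer derivation; the benefit is that nothing is asserted without an equational justification, and you correctly observe that the least-fixpoint rules are not needed for this direction.
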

\begin{proof}
%The reasoning is by induction on $n$.\\
%\emph{Base case:} $n = 0$. If $n = 0$ then $(1 + p . t)^0 \equiv 1$, inferred based on the definition of Kleisli composition.
%By KA-UNROLL-L, it holds that $(p.t)^* \equiv 1 + p.t.(p.t)^*$. Hence,~(\ref{eq::unfold-star}) holds for $n=0$.\\
%\emph{Induction step:} Assume~(\ref{eq::unfold-star}) holds for all $k$ such that $0 \leq k \leq n$. We want to show that%\newblock \urlprefix~(\ref{eq::unfold-star}) holds for $k = n+1$.
%
First, observe that
\begin{equation}\label{eq:rew-star-n-plus-one}
\inn.(p.t)^*.\out \equiv \inn.(1 + p.t + (p.t)^2 + \ldots + (p.t)^{n} + (p.t)^{n+1}.(p.t)^* ).\out
\end{equation}
by KA-UNROLL-L/R, KA-PLUS-IDEM and KA-SEQ-DIST-L/R.
Consequently, by Lemma~\ref{lm::expand-m}, the following also holds:
\begin{equation}\label{eq:rew-star-n-plus-one-lm1}
\inn.(p.t)^*.\out \equiv \inn.(1 + p.t)^{n}.\out +  \inn.(p.t)^{n+1}.(p.t)^*.\out
\end{equation}
Therefore,
\[
\inn.(1 + p.t)^{n}.\out \,\, \leq \,\, \inn.(p.t)^*.\out
\]
holds as well.
\end{proof}

\begin{theorem}(Star Elimination for Safety)\label{thm::star-elim-safety}
Assume a network topology $t$, a {\PHbH} switch policy $p$ of size $n$, an ingress policy $\inn \in \Sigma_{\Tst^*}$, and an egress policy $\out \in \Sigma_{\Tst^*}$ encoding the hazard. The following holds:
\begin{equation}\label{eq:star-elim-safe}
\vdash \inn . (p.t)^*. \out \equiv 0 {\textnormal{~~iff~~}} \vdash \inn . (1 + p.t)^n. \out \equiv 0
\end{equation}
\end{theorem}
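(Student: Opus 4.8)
The plan is to establish the biconditional by proving its two implications separately, using the soundness and completeness of the NetKAT axiomatisation to pass freely between the equational statement $\vdash e \equiv 0$ and the semantic one $\llbracket e\rrbracket h = \emptyset$ for every history $h$. The left-to-right implication will be essentially free, while the right-to-left one carries all the content and is the only place where the size-$n$ hypothesis on the $\PHbH$ policy $p$ is used.

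For the direction $\vdash \inn.(p.t)^*.\out \equiv 0 \Rightarrow \vdash \inn.(1+p.t)^n.\out \equiv 0$, I would simply invoke Lemma~\ref{lm::expand-m-star}, which gives $\inn.(1 + p.t)^n.\out \leq \inn.(p.t)^*.\out$. By the definition of $\leq$ this means $\inn.(1 + p.t)^n.\out + \inn.(p.t)^*.\out \equiv \inn.(p.t)^*.\out$; substituting the hypothesis $\inn.(p.t)^*.\out \equiv 0$ and applying KA-PLUS-ZERO yields $\inn.(1 + p.t)^n.\out \equiv 0$. No structural assumption on $p$ is needed here.

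For the converse, I would first move to the semantic side. Assuming $\vdash \inn.(1+p.t)^n.\out \equiv 0$, soundness gives $\llbracket \inn.(1+p.t)^n.\out\rrbracket = \emptyset$; by Lemma~\ref{lm::expand-m}, together with the reading of $+$ as union and of $(p.t)^j$ as $j$-fold Kleisli composition, this is equivalent to $\llbracket \inn.(p.t)^j.\out\rrbracket = \emptyset$ for every $j \leq n$. Since $\llbracket \inn.(p.t)^*.\out\rrbracket = \bigcup_{k\geq 0}\llbracket \inn.(p.t)^k.\out\rrbracket$, it then suffices to prove $\llbracket \inn.(p.t)^k.\out\rrbracket = \emptyset$ for every $k > n$ as well, after which completeness returns $\vdash \inn.(p.t)^*.\out \equiv 0$. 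The core is a pigeonhole/loop-elimination argument. I would first record the invariant that neither $p$ nor $t$ modifies any field other than $\pt$: each summand of $p$ has the form $\tst^*_i.\asspt_i$, whose only modification is the port assignment $\asspt_i$, and $t$ likewise only reassigns $\pt$ across a link; moreover $\inn,\out \in \Sigma_{\Tst^*}$ are pure tests. Hence, along any computation witnessing a history in $\llbracket \inn.(p.t)^k.\out\rrbracket$, all non-port fields remain fixed at their input values. Consequently, immediately after each of the $k$ applications of $p$ the packet sits at one of the at most $n$ target ports occurring in $\asspt_1,\ldots,\asspt_n$, so there are at most $n$ distinct ``post-$p$'' packet states. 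For $k > n$ the pigeonhole principle forces two of these states, say after the $a$-th and $b$-th applications with $a < b$, to coincide as full packets; since the behaviour of $p.t$ on the head packet depends only on that packet, the segment of the computation strictly between them is a genuine loop that can be excised, producing a computation with the same input and output packets but only $k-(b-a) < k$ applications of $p.t$. Iterating this excision yields a witness of length at most $n$, contradicting the already-established emptiness of $\llbracket \inn.(p.t)^j.\out\rrbracket$ for $j \leq n$.

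The step I expect to be the main obstacle is making this excision rigorous: one must verify that deleting the looping segment preserves membership in the semantics, i.e.\ that the shortened sequence of $p.t$-steps still begins at $\inn$ and ends at $\out$ with exactly the same output packet. This hinges precisely on the two facts isolated above — that only $\pt$ ever changes, and that the $\PHbH$ policy of size $n$ offers at most $n$ post-$p$ ports — which together guarantee that the repeated post-$p$ state is truly identical (not merely similar), so that the remainder of the computation following the second occurrence can be replayed verbatim from the first.
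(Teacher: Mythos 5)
Your proof is correct, and its core is the same combinatorial idea the paper uses: a {\PHbH} policy of size $n$ can deposit the packet at only at most $n$ distinct ports after each hop, so any witness with more than $n$ iterations of $p.t$ must revisit a state, and the intervening loop can be excised to contradict the emptiness of the first $n$ powers. The easy direction is identical to the paper's (Lemma~\ref{lm::expand-m-star} plus KA-PLUS-ZERO). Where you genuinely diverge is the formal vehicle for the hard direction: the paper stays nominally inside the equational theory, expanding $(p.t)^N$ into a sum of sequential compositions, isolating a non-zero summand $e_1 \triangleq \inn.p_N.\out$ containing a repeated port test, and deleting the looping sub-term to obtain a shorter non-zero policy $p_N\!\!\downarrow$ contradicting $\inn.(p.t)^i.\out \equiv 0$ for $i \le n$; you instead pass to the packet semantics via soundness and completeness and excise the loop on actual runs. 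Your route buys rigour at the step the paper glosses over --- it makes explicit why the repeated state is identical as a full packet (only $\pt$ is ever modified and there is no ${\bf dup}$, so the continuation is determined by the head packet) and why the shortened run remains a witness with the same input and output --- at the cost of invoking completeness to return from semantic emptiness to $\vdash e \equiv 0$ (the paper's manipulations of ``$\not\equiv 0$'' facts implicitly rely on the same soundness/completeness anyway). One caveat applies equally to both arguments: you, like the paper's proof, assume the topology has the concrete form $\Sigma_j\, \pt = v_j . \pt \leftarrow v'_j$, i.e.\ that $t$ tests and modifies only $\pt$; since the theorem statement does not impose this on $t$, it should be recorded as a hypothesis rather than left tacit.
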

\begin{proof}
The ``if'' case follows immediately, as by Lemma~\ref{lm::expand-m-star} and the hypothesis the following holds:
\[
0 \leq \inn.(1 + p.t)^{n}.\out \,\, \leq \,\, \inn.(p.t)^*.\out \equiv 0.
\]

For the ``only if'' case we proceed by reductio ad absurdum.
%In short: we show that a path from $\inn$ to $\out$ that crosses more than $n$ switches entails a loop-free path from $\inn$ to $\out$ that traverses at most $n$ switches thus, contradicting the hypothesis.

Assume:
\begin{equation}\label{eq::red-abs}
\inn . (p.t)^*. \out \not \equiv 0.
\end{equation}
By the hypothesis, the definition of Kleene $*$ and the assumption in~(\ref{eq::red-abs}), it follows that there exists $N > n$ such that:
\[
\inn . (p.t)^N. \out \not \equiv 0.
\]
Consider:
\begin{equation}\label{eq::explicit-p-t}
\begin{array}{rcl}
p & \triangleq & \Sigma_{i \in \{1, \ldots, n\}} \, \tst^*_i \,.\, \pt \leftarrow v_i\\
t &  \triangleq & \Sigma_{j \in \{1, \ldots, m\}} \pt = v_j \,.\, \pt  \leftarrow v'_j
\end{array}
\end{equation}
and assume $N = n+k$.
Given the shape of the {\PHbH} switch policy $p$ and topology $t$, it follows that there exist policies $e_1 \not \equiv 0$, $e_2$ such that:
\[
\vdash \inn . (p.t)^N. \out  \equiv e_1 + e_2
\]
with $e_1 \triangleq \inn.p_N. \out$ and
\begin{equation}\label{eq::split-pol}
p_N \triangleq (\tst^*_{i_1} \,.\, \pt \leftarrow v_{i_1}\,.\, \pt = v_{i_1} . \pt \leftarrow v'_{j_1}).
\,\, \ldots\,\, . (\tst^*_{i_{n+k}} \,.\, \pt \leftarrow v_{i_{n+k}} \,.\, \pt = v_{i_{n+k}} . \pt \leftarrow v'_{j_{n+k}} )
\end{equation}
where $ \tst^*_{i_l}, v_{i_l}, v'_{j_l}$ range over $ \tst^*_{i}, v_{i}, v'_{j}$ as in~(\ref{eq::explicit-p-t}).

Note that $e_1$ denotes a path that goes through $N > n$ switches from $\inn$ to $\out$.
Recall that the switch policy $p$ can define loop-free paths traversing at most $n$ switches (the size of $p$). Hence, we conclude that the identified path is not loop-free, i.e.:
\begin{equation}\label{eq::e1-loop}
\begin{array}{rcl}
e_1 & \triangleq & \inn.\\
&& (\tst^*_{i_1} \,.\, \pt \leftarrow v_{i_1}\,.\, \pt = v_{i_1} . \pt \leftarrow v'_{j_1}).\\
&& \,\, \ldots\,\, \\
&& (\tst^*_{i_{\alpha}} \,.\, {\bf \pt = v_{i_\alpha}}\,.\,  \pt \leftarrow v'_{i_\alpha}).\\
&& \,\, \ldots\,\, \\
&& (\tst^*_{i_{\alpha + \beta}} \,.\, {\bf \pt = v_{i_\alpha}}\,.\,  \pt \leftarrow v'_{i_{\alpha + \beta}}).\\
&& \,\, \ldots\,\, \\
&& (\tst^*_{i_{n+k}} \,.\, \pt \leftarrow v_{i_{n+k}} \,.\, \pt = v_{i_{n+k}} . \pt \leftarrow v'_{j_{n+k}} ).\\
&&\out
\end{array}
\end{equation}
with $\beta \geq k$.
Based on $e_1$, we can devise a policy defining a loop-free path that crosses at most $n$ switches from $\inn$ to $\out$.
%We proceed as follows. Let $\tst_N^*$ be the maximal sequence of tests commuted to the left within $p_N$, without port tests. 
%As $e_1 \not \equiv 0$, it follows that:
%\begin{equation}\label{eq::tests-left-in-out}
%\inn.\tst_N^*.\out \not \equiv 0.
%\end{equation}
%
%Let $\tst_N^*\!\!\downarrow$ be $\tst_N^*$ without the concatenation of tests $\tst_\alpha^* . \ldots .\tst_{\alpha+\beta}^*$ as in~(\ref{eq::e1-loop}).
%Consequently, it holds that:
%\begin{equation}\label{eq::-restricted-tests-left-in-out}
%\inn.\tst_N^*\!\!\downarrow.\out \not \equiv 0.
%\end{equation}
Consider a policy $p_N\!\!\downarrow$ that stands for $p_N$ without the loop of size $\beta$ from~(\ref{eq::e1-loop}):
\begin{equation}\label{eq::path-restricted}
\begin{array}{rcl}
p_N\!\!\downarrow & \triangleq  & \tst^*_{i_1} \,.\,\pt \leftarrow v_{i_1}\,.\, \pt = v_{i_1} . \pt \leftarrow v'_{j_1}. \\
&& \ldots\\
&& \tst^*_{i_\alpha} \,.\ \pt = v_{i_\alpha}\,.\,  \pt \leftarrow v'_{i_\alpha}.\\
&& \tst^*_{i_{\alpha + \beta}} \,.\ \pt = v_{i_{\alpha + \beta+1}}\,.\,  \pt \leftarrow v'_{i_{\alpha + \beta+1}}.\\
&& \ldots\\
&& \tst^*_{i_{n+k}} \,.\, \pt \leftarrow v_{i_{n+k}} \,.\, \pt = v_{i_{n+k}} . \pt \leftarrow v'_{j_{n+k}} .\\
\end{array}
\end{equation}

By the construction of $p_N\!\!\downarrow$ and the fact that $e_1 \not \equiv 0$ it follows that:
\begin{equation}\label{eq::contradiction-path}
\inn . p_N\!\!\downarrow . \out \not \equiv 0.
\end{equation}
In words, we identified a path that traverses the topology from $\inn$ to $\out$ and crosses at most $n$ switches.
Moreover, by the hypothesis and Lemma~\ref{lm::expand-m}, the following hold:
\begin{equation}\label{eq::explicit-hyop}
\begin{array}{rcll}
\inn . \out  &  \equiv & 0 & \\
 \inn . (p.t)^i. \out & \equiv & 0 & \textnormal{for all $i \in \{1,\ldots, n\}$}.
\end{array}
\end{equation}
Hence, (\ref{eq::contradiction-path}) contradicts~(\ref{eq::explicit-hyop}).
We conclude that the ``only if'' case holds as well.
\end{proof}

With these ingredients at hand, in accordance with $Q_1$ and $Q_2$, consider an alteration of the axiomatisation as follows. Firstly, we inhibit the axioms PA-MOD-MOD, PA-FILTER-MOD and KA-UNROLL-R/L. Then, we add the star elimination axiom corresponding to~(\ref{eq:star-elim-safe}) in Theorem~\ref{thm::star-elim-safety}:
\begin{equation}\label{eq::star-elim-safe-axiom}
\inn . (p.t)^*. \out \equiv \inn . (1 + p.t)^n. \out .
\end{equation}
Let $\svdash$ be the entailment relation over the modified axiomatisation.

\begin{definition}[Safety Failure Explanations]\label{def::safety-fail-expl}
Assume a network topology $t$, a {\PHbH} switch policy $p$ of size $n$, an ingress policy $\inn \in \Sigma_{\Tst^*}$, and an egress policy $\out \in \Sigma_{\Tst^*}$ encoding the hazard. A \emph{safety failure explanation} is a policy {${\it expl} \not \equiv 0$} in canonical form (i.e., it cannot be reduced further) such that:
\begin{equation}\label{eq:star-elim-safe2}
\svdash \inn . (p.t)^*. \out \equiv {\it expl}.
\end{equation}
\end{definition}

For an example, we refer to the case of the two programmers providing switch policies $p_1$ and $p_2$ forwarding packets from host $H_1$ to $H_2$, and from $H_3$ to $H_4$ within the network in Figure~\ref{fig::simpl-net}. As previously discussed, the end-to-end network behaviour defined over each of the aforementioned policies can be proven correct using the NetKAT axiomatisation. Nevertheless, a comprehensive explanation of what caused the erroneous behaviour over the unified policy $p_1 + p_2$ could not be derived according $\vdash$. The new axiomatisation, however, entails the following explanation:
\[
\svdash (\pt = 1) . ((p_1 + p_2) . t)^* . (\pt = 3 + \pt = 4) \equiv \pt = 1 . \pt \leftarrow 5 . \pt \leftarrow 6 . \pt \leftarrow 4
\]
showing how packets at port $1$ can reach port $4$.
Similarly,
\[
\svdash (\pt = 3) . ((p_1 + p_2) . t)^* . (\pt = 1 + \pt = 2) \equiv \pt = 3 . \pt \leftarrow 5 . \pt \leftarrow 6 . \pt \leftarrow 2
\]
shows how packets at port $3$ can reach port $2$.

\begin{remark}
The work in~\cite{DBLP:conf/popl/AndersonFGJKSW14} proposes a ``star elimination'' method for switch policies not containing $\bf{dup}$ and switch assignments. The procedure in~\cite{DBLP:conf/popl/AndersonFGJKSW14} employs a notion of normal form to which each NetKAT policy can be reduced. The reason for not using the aforementioned star elimination in our context is that the normal forms in~\cite{DBLP:conf/popl/AndersonFGJKSW14} ``forget'' the intermediate sequences of assignments and tests, and reduce policies to sums of expressions of shape $(f_1 = v_1 .\, \ldots \,. f_n = v_n). (f_1 \leftarrow v'_1 .\, \ldots \,. f_n \leftarrow v'_n)$ where $f_1, \ldots, f_n$ are the packet fields. Hence, the normal forms exploited by the star elimination in~\cite{DBLP:conf/popl/AndersonFGJKSW14} can not serve as comprehensive failure explanations.
\end{remark}

\begin{remark}
Note that the safety failure explanations in Definition~\ref{def::safety-fail-expl} are not minimal.
There might be cases in which two explanation paths
\[
\begin{array}{rl}
e_1 \triangleq p'.p'' ~~~~~~~~~~~~ e_2 \triangleq p'.\tilde{p}.p''
\end{array}
\]
are identified.
Nevertheless, minimality in this context is easy to achieve in a post-processing step that pattern-matches expressions like $e_1$ and $e_2$ and keeps only $e_1$ as relevant explanation.
\end{remark}

\section{Discussion}\label{sec::discussion}

In this paper we formulate a notion of safety in the context of NetKAT programs~\cite{DBLP:conf/popl/AndersonFGJKSW14} and provide an equational framework that computes all relevant explanations witnessing a bad, or an unsafe behaviour, whenever the case. The proposed equational framework is a slight modification of the sound and complete axiomatisation of NetKAT, and is parametric on the size of the considered  hop-by-hop switch policy. Our approach is orthogonal to related works which rely on model-checking algorithms for computing all counterexamples witnessing the violation of a certain property, such as~\cite{DBLP:conf/vmcai/Leitner-FischerL13,DBLP:journals/corr/abs-1901-00588}, for instance.
A corresponding tool for automatically computing the explanations can be straightforwardly implemented in a programming language like Maude~\cite{DBLP:conf/rta/ClavelDELMMQ99}, for instance; we leave this exercise as future work.
We consider assessing the complexity of the procedure for real case scenarios, on top of benchmarks as in~\cite{DBLP:conf/nsdi/KazemianVM12,DBLP:journals/jsac/KnightNFBR11}, for instance.

The results in this paper are part of a larger project on (counterfactual) causal reasoning on NetKAT.
In~\cite{Lew73}, Lewis formulates the counterfactual
argument, which defines when an event is considered a cause for some
effect (or hazardous situation) in the following way: a) whenever the event
presumed to be a cause occurs, the effect occurs as well, and b) when the presumed cause does not occur,
the effect will not occur either.
The current result corresponds to item a) in Lewis' definition, as it describes the events that have to happen in order for the hazardous situation to happen as well. The next natural step is to capture the counterfactual test in b). This reduces to tracing back the explanations to the level of the switch policy, and rewrite the latter so that it disables the generation the paths leading to the undesired egress. The generation of a ``correct'' switch policy can be seen as an instance of program repair.

In the future we would be, of course, interested in defining notions of causality (and associated algorithms) with respect to the violation of other relevant properties such as liveness, for instance. We would also like to explain and eventually disable routing loops (i.e., endlessly looping between A and B) from occurring. Or, we would like to identify the cause of packets being not correctly filtered by a certain policy. 

\paragraph{Acknowledgements}{The author is grateful to the reviewers of FROM 2019, for their feedback and observations.
Special thanks are addressed to Tobias Kapp\'e, for his useful comments and insight into the formal foundations of NetKAT.}

%\nocite{*}
%\bibliographystyle{eptcs}
%\bibliography{generic}

\end{document}